\newtheorem{theorem}{Theorem}
\newtheorem{Coro}{Corollary}
\newcommand{\STAB}[1]{\begin{tabular}{@{}c@{}}#1\end{tabular}}
\newcommand{\E}{\mathbb {E}}
\newcommand{\Z}{\mathbb {Z}}
\newcommand{\Var}{\operatorname{Var}}
\newcommand{\betab}{\bff{\beta}}
\newcommand{\bff}[1]{\boldsymbol{#1}}
\newcommand{\Sigmab}{\bff{\Sigma}}
\newcommand{\eb}{\bff{e}}
\newcommand{\Wb}{\bff{W}}
\newcommand{\Sb}{\bff{S}}
\newcommand{\beq}{\begin{eqnarray}}
\newcommand{\eeq}{\end{eqnarray}}
\newcommand{\beqq}{\begin{eqnarray*}}
\newcommand{\eeqq}{\end{eqnarray*}}
\newcommand{\lp}{\left(}
\newcommand{\rp}{\right)}
\newcommand{\lc}{\left[}
\newcommand{\rc}{\right]}
\journal{Journal of Process Control}
\begin{document}

\begin{frontmatter}

\title{Fault detection and diagnosis of batch process using dynamic ARMA-based control charts \tnoteref{mytitlenote}}

\author{Batista Nunes de Oliveira, Marcio Valk}
\author{Danilo Marcondes Filho \corref{Danilo Marcondes Filho}}
\address{Department of Statistics, Federal University of Rio Grande do Sul, Av. Bento Gonçalves, 9500 – Porto Alegre, RS, Brazil}

\cortext[Danilo Marcondes Filho]{Corresponding author}
\ead{marcondes.filho@ufrgs.br}

\begin{abstract}
A wide range of approaches for batch processes monitoring can be found in the literature. This kind of process generates a very peculiar data structure, in which successive measurements of many process variables in each batch run are available. Traditional approaches do not take into account the time series nature of the data. The main reason is that the time series inference theory is not based on replications of time series, as it is in batch process data. It is based on the variability in a time domain. This fact demands some adaptations of this theory in order to accommodate the model coefficient estimates, considering jointly the batch to batch samples variability (batch domain) and the serial correlation in each batch (time domain). In order to address this issue, this paper proposes a new approach grounded in a group of control charts based on the classical ARMA model for monitoring and diagnostic of batch processes dynamics. The model coefficients are estimated (through the ordinary least square method) for each historical time series sample batch and modified \textit{Hotelling} and \textit{t-Student} distributions are derived and used to accommodate those estimates. A group of control charts based on that distributions are proposed for monitoring the new batches. Additionally, those groups of charts help to fault diagnosis, identifying the source of disturbances. Through simulated and real data we show that this approach seems to work well for both purposes.

\end{abstract}

\begin{keyword}
Batch processes monitoring \sep ARMA model \sep ARMA control charts \sep Modified \textit{Hotelling} and \textit{t-Student} distribution

\end{keyword}

\end{frontmatter}


\section{Introduction}

Industrial batch processes are used to produce a wide range of products. This kind of process generates for each batch run time series representing multiple measurements of process variables. That peculiar data structure containing samples of time series and data with a strong dynamic feature makes it still very challenging to develop control charts based on time series models. 

Traditional monitoring approaches do not take into account directly the time series nature of the data. Most of them decompose the tri-dimensional data array (batches $\times$ variables $\times$ time-instants) in a two-dimensional array (batches $\times$ variables/time-instants), based on the precursor approach of \linebreak \cite{Nomikos1995}. In this context, considering batches as sample replications, control approaches are proposed by using multivariate techniques (as Principal Components, Partial Least Squares, Discriminant Analysis, Support Vector machines, Neural Networks, etc) applied in the variable/time domain. These multivariate-based control charts are able to capture the dynamic data behavior in some way. We can mention a number of papers presenting improvements, applications and fruitful discussions in this direction in \cite{Wold1998}, \cite{jia2010}, \cite{van2012}, \cite{peng2014}, \cite{kuang2015}, \cite{wang2018}, \cite{peres2019}, \cite{li2020} and \cite{wang2021}. 

Control charts based on time series models are well known in the context of continuous processes. These kinds of processes have an intrinsic two-way data structure (samples $\times$ variables) since there are no replications of measures in each sample, i.e, there is no time dimension. Those models are mainly aimed to deal with the serial sample correlation, by using traditional control charts for the uncorrelated residuals or the cumulative-based charts for the fitted values. In both cases, as the first step, the model is adjusted from historical in-control samples and, in the next step, future samples are monitored through those charts. To accomplish that goal there are a wide range of propositions using the Autoregressive Moving Average (ARMA) and Vector Autoregressive (VAR) models. The good illustration can be seen in a case study presented in [Montgomery, 2007], in which a Shewhart control chart is used to monitor the residuals of the AR(1) model. In the context of a multivariate continuous process, the same goal is accomplished by using the VAR model for uncorrelated data and the Hotelling-based control chart for monitoring the vector of residuals. Some papers presenting the ground theory and additional contributions in such direction can be found in \cite{snoussi2011}, \cite{Pan2012}, \cite{Vanhatalo2015}, \cite{leoni2015} and \cite{simoes2016}. We can also find in the literature approaches for monitoring fitted values from those models. In this case, fitted values from new samples are monitored by using the \textit{exponentially weighted moving average} (EWMA) based control charts. A good review of those procedures can be seen in \cite{alwan1988}, \cite{jiang2000}, \cite{jiang2001}, \cite{jiang2007}, \cite{de2018}, \cite{de2019}, \cite{costa2021}, \cite{sheikhrabori2021}, \cite{nguyen2021} and \cite{jafarian2021}.

There are a few works in the literature presenting approaches for batch process monitoring based on time series models. The reason why this topic indeed hasn't been fully explored yet is that the time series inference theory is not based on replications of time series (each one bringing successive measurements of processes variables), as it is in batch process data. It is based on the variability in a time domain. This fact demands some adaptations of this theory in order to accommodate the coefficient estimates, taking into account the number of batch samples (batch to batch variability - batch domain analysis). The main problem is to build a single estimate for the model coefficients given a number of time series available, combining information from the batch and time domains. We highlight the work of \cite{Choi2008}, which propose a set of charts based on the traditional Hotelling statistic for the VAR residuals and fitted vector of observations, obtained thought the adjusted VAR from the historical time series samples batches in a reduced variable space. In this approach, the VAR coefficient estimates are done by using the Partial Least Square regression technique instead of by using the VAR estimation theory. \cite{wang2017} propose a group of control charts based on the 2D ARMA model. That model formulation try to capture the within-batch and batch-to-batch variability. For each batch they use the \textit{iterative step-wise regressions} (SWR) and the \textit{least absolute shrinkage and selection operator} (LASSO) to identify the model order and select the model coefficients. Future batch samples are monitored through control charts based in two stability index built from the coefficients estimates (the batch-to-batch and the within-batch index).  \cite{marcondes2020} presented a recent approach to deal with batch processes using VAR models focused on the VAR coefficients directly. In short, the VAR coefficients are estimated for each historical time series sample batch and by using a single estimate [as a combination of individual ordinary least square (OLS) estimates from each batch], the \textit{Hotelling} and the \textit{Generalized Variance} control charts are used for monitoring new batches. 

This paper proposes a new approach grounded in a group of control charts based on the classical ARMA model for dynamic monitoring and diagnostic of batch processes. Considering one variable at the time, we present a control approach based on the ARMA coefficients. The model coefficients are estimated for each historical time series sample batch and the control charts based on the modified \textit{Hotelling} distribution are used to monitor new batches. Additionally, we extend this idea by using a group of control charts, one for each ARMA coefficient, based on the modified \textit{t-Stutent} distribution, in order to help the diagnostic of disturbances detected by the \textit{Hotelling} chart. There are two meaningful contributions in our proposition. The first one is that through the modified \textit{Hotelling} and \textit{t-Stutent} distributions the model coefficient estimates generated from the number of historical batches can be easily accommodated. There is no need to use any estimation method based on complex algorithms and so deal with convergence and computational time issues. Also, the derived exact distributions makes the control charts capable to detect disturbances of any level, even in a scenario in which there are few in-control batch samples available. The second one is that, unlike these mentioned approaches, we address the fault diagnose problem by using a group of \textit{t-Stutent} control charts. We show through a simulated batch process that the proposed approach outperforms a competitor based on the model residuals (the most common approach used in the continuous and batch processes) and it is powerful in terms of disturbance diagnosis. Furthermore, this approach seems to work well when applied in a real data set.

In order to describe our proposition, the paper is organized as follows: Section 2 brings a detailed description of our methodology, including the basis of ARMA models, the \textit{Hotelling} and \textit{t-Stutent} modified statistics and the ARMA-based control charts. In Section 3, the proposed approach is illustrated through simulated batch data. Section 4 shows an application in a real data set. Conclusions are presented in Section 5.

\section{ARMA-based control approach}
\subsection{ARMA model}

The Autoregressive-Moving-Average (ARMA) model is widely used in time series analysis and forecasting due to the flexibility and suitable statistical properties. The class of ARMA model was popularized by \cite{box1970} and is characterized by a simple and  parsimonious formulation. It combines the Autoregressive (AR) model, which involves regressing a variable on its own lagged values, with moving averages (MA) model, which considers the error term as a linear combination of its own lagged terms. In general, we can write the ARMA model as follows:
\begin{equation} \label{eq:ARMAequation}
x_t=\phi_0+\phi_{1}x_{t-1}+\cdots+\phi_{v}x_{t-v}+\epsilon_{t}+\theta_{1}\epsilon_{t-1}+\cdots+\theta_{w}\epsilon_{t-w}, \quad t \in \Z,
\end{equation} 
where the error term is a white noise process (WN) with zero mean and variance $\sigma^2$, noted by ${\epsilon_t}\sim WN(0,\sigma^2), \, t\in \Z$.
This formulation is usually referred as ARMA($v,w$) since $v$ lags of the return are used as well as $w$ lags of the error term to specify the linear functional form to be estimated. Let's consider the vector of parameters
\begin{equation}\label{eq:beta}
\betab=[\phi_0,\phi_1,\dots,\phi_v,\theta_1,\dots, \theta_w].
\end{equation} 
Assuming that the process $\{x_t\}$ defined in \eqref{eq:ARMAequation} is causal and invertible, than for time-series of length $T$ sampled from this process, the asymptotic distribution (in $T$) of the OLS estimators $\widehat\betab$, where  $\widehat\betab=[\widehat\phi_0,\widehat\phi_1,\dots,\widehat\phi_v,\widehat\theta_0,\widehat\theta_1,\dots,\widehat\theta_w]$, is given by Theorem 8.11.1 in \cite{brockwell1991}. Under suitable conditions, it follows that

\beq\label{eq:multivariate}
(\widehat{\betab}-\betab)\stackrel{.}{\sim}N_p(\bf{0},\Sigmab_{\betab}),
\eeq
where $\Sigmab_{\betab}$ plays the role of the variance and covariance matrix of the $\betab$ estimators and  $\stackrel{.}{\sim}$  means asymptotic convergence in distribution, when $T$ increases.

Unfolding \eqref{eq:multivariate} we can write the univariate asymptotic distribution of each individual element of the vector $\widehat\betab$ in \eqref{eq:beta}. Let $\widehat\beta_*$ be any element of vector $\widehat\betab$, than we have

\beq\label{eq:univariate}(\widehat{\beta}_*-\beta_*)\stackrel{.}{\sim}N_p(0,\sigma^{2}_{\beta_*}),\eeq
where $\sigma^{2}_{\beta_*}$ is the corresponding element of the main diagonal of the $\Sigmab_{\betab}$.

In case of data coming from normally distributed variable, we consider the exact distribution of OLS estimated coefficients in \eqref{eq:multivariate} and \eqref{eq:univariate} rather than asymptotic one.
 
\subsection{Hottelling and t-Student adjusted distributions}

Following the aim of our work, we now assume the scenario with many trials available from an ARMA process, i.e., data samples representing time series. This is a typical context of a batch process that generates data representing trajectories of a variable to be considered under monitoring. In the next Section, the set of ARMA-based control charts for this kind of process will be proposed. They are based on the adaptation of the classical \textit{Hotelling} $\mathcal{T}^{2}$ statistic \cite{Montgomery2007}. In the two theorems below we demonstrate the distribution of the quantities that are the ground of our approach.

\begin{theorem}\label{theo1}

Let's consider $I$ time series of length $T$ from an ARMA($v,w$) process in \eqref{eq:ARMAequation}, $\betab$  the vector of parameters defined in \eqref{eq:beta} and $\widehat{\betab}_i$ the vector of OLS estimates for the $i^{st}$ sample, both vectors of length $p=v+w+1$. Assume that $\widehat\beta_{*i}$ is an element $\widehat{\betab}_i$. Than,

\beq\label{eq:betanormal}
\frac{\lp\widehat{\betab}_{*i}-\widehat{\overline{\betab}}_*\rp}{S_{\widehat{\beta}_{*i}}}\stackrel{.}{\sim} \sqrt{\frac{(I+1)}{I}}t_{(I-1)}, \,\,\mbox{as   }\, T \, \mbox{   increases},
\eeq
where $S^2_{\widehat{\beta}_{*i}}=\frac{1}{(I-1)}\sum_{i=1}^{I}\lp\widehat{\beta}_{*i}-\widehat{\overline{\beta}}_*\rp^2$, \, $\widehat{\overline{\beta}}_*=\frac{1}{I}\sum_{i=1}^{I}\widehat{\beta}_{*i}$ and $t_{(I-1)}$ is the $t-$Student distribution with $I-1$ degrees of freedom. 

\end{theorem}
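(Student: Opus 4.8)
The plan is to reduce the statement to the classical Gaussian sampling theory behind the normal prediction interval, with \eqref{eq:univariate} supplying the (asymptotic in $T$) normality of the individual coefficient estimates. First I would note that the $I$ reference time series are drawn independently from the same ARMA($v,w$) process, so the estimates $\widehat{\beta}_{*1},\dots,\widehat{\beta}_{*I}$ entering $\widehat{\overline{\beta}}_*$ and $S_{\widehat{\beta}_{*i}}$, together with the estimate $\widehat{\beta}_{*i}$ in the numerator (taken from a batch independent of those $I$ — which is what produces the $\sqrt{(I+1)/I}$ factor rather than $\sqrt{(I-1)/I}$), are mutually independent; and by \eqref{eq:univariate} each satisfies $\widehat{\beta}_{*i}-\beta_*\stackrel{.}{\sim}N(0,\sigma^{2}_{\beta_*})$ as $T$ increases. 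Consequently the standardized vector $(Z_0,Z_1,\dots,Z_I)$ with $Z_j=(\widehat{\beta}_{*j}-\beta_*)/\sigma_{\beta_*}$ ($Z_0$ the numerator batch) converges jointly in distribution, as $T\to\infty$, to a vector of $I+1$ i.i.d.\ $N(0,1)$ variables; for Gaussian data this is an exact identity and no limit is needed.

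Next I would use that the left-hand side of \eqref{eq:betanormal} is invariant under the common shift $\beta_*$ and scale $\sigma_{\beta_*}$: rewriting $\widehat{\overline{\beta}}_*$ and $S_{\widehat{\beta}_{*i}}$ through the $Z_j$'s, both constants cancel and
\[
\frac{\widehat{\beta}_{*i}-\widehat{\overline{\beta}}_*}{S_{\widehat{\beta}_{*i}}}
=\frac{Z_0-\overline{Z}}{\sqrt{\dfrac{1}{I-1}\sum_{k=1}^{I}\lp Z_k-\overline{Z}\rp^{2}}},
\qquad \overline{Z}=\frac{1}{I}\sum_{k=1}^{I}Z_k .
\]
This is a continuous function of $(Z_0,\dots,Z_I)$ everywhere except on the set $\{Z_1=\dots=Z_I\}$, which carries probability zero under the limiting (continuous) law; hence by the continuous mapping theorem the distribution of the statistic converges, as $T\to\infty$, to the one obtained by substituting genuinely i.i.d.\ standard normals into the right-hand side.

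It then remains to identify that limiting law. With $Z_1,\dots,Z_I$ i.i.d.\ $N(0,1)$ and $Z_0$ an independent $N(0,1)$, one has $Z_0-\overline{Z}\sim N\!\lp 0,\tfrac{I+1}{I}\rp$ (using $\Var(\overline{Z})=1/I$ and independence of $Z_0$ and $\overline{Z}$); moreover $\tfrac{1}{I-1}\sum_{k}(Z_k-\overline{Z})^{2}$ follows the $\chi^{2}_{I-1}/(I-1)$ law and, by Cochran's theorem, is independent of $\overline{Z}$ and of $Z_0$, hence of the numerator. Writing the ratio as $\sqrt{(I+1)/I}$ times the quotient of a standard normal and the square root of an independent $\chi^{2}_{I-1}/(I-1)$ variable, we recognise a $\sqrt{(I+1)/I}\,t_{(I-1)}$ random variable, which is exactly \eqref{eq:betanormal}.

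The step needing the most care is coupling the two regimes, since \eqref{eq:univariate} is only an asymptotic ($T\to\infty$) statement whereas the target is a nonlinear functional of the estimates whose denominator could in principle degenerate. I would handle this precisely as sketched: establish the joint asymptotic normality of the standardized vector (which is where batch independence is used), then transfer the limit through the functional by the continuous mapping theorem, controlling the discontinuity set by the fact that it is null under the limiting law; the remaining algebra is the classical $t$-distribution computation. Under the Gaussian-data assumption noted after \eqref{eq:univariate} this subtlety disappears and the identity holds exactly for every $T$.
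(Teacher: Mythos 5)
Your proposal is correct and rests on the same core computation as the paper's proof: write the numerator as (new estimate minus reference mean), use independence to get the $N(0,(I+1)/I)$ law, pair it with the $\chi^2_{(I-1)}/(I-1)$ law of the scaled sample variance, and invoke the normal-over-root-chi-square characterization of the $t$ distribution. Where you go beyond the paper is in making explicit two points it leaves implicit. First, you state outright that the batch in the numerator must be \emph{independent} of the $I$ reference batches entering $\widehat{\overline{\beta}}_*$ and $S_{\widehat{\beta}_{*i}}$; this is exactly what is needed for the variance to be $\frac{I+1}{I}\sigma^2_{\beta_*}$ rather than $\frac{I-1}{I}\sigma^2_{\beta_*}$, and it is the hidden content of the paper's bare assertion that ``$C$ and $D$ are independent'' (which would be false under a literal reading of the theorem, where the index $i$ appears to range over the same $1,\dots,I$ used in the sums). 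You correctly read the result in its intended Phase-$\mathcal{II}$ monitoring sense. Second, you supply the independence of numerator and denominator via Cochran's theorem and transfer the asymptotics through the nonlinear functional by joint convergence plus the continuous mapping theorem (controlling the null discontinuity set where the sample variance degenerates); the paper instead manipulates asymptotic distributions symbolically and applies its fact (ii) without verifying the required independence. These additions do not change the route, but they close the gaps in the published argument.
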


\begin{proof}
Note that $\widehat{\beta}_{*1},\dots,\widehat{\beta}_{*I}$ are independent and identically distributed ($IID$) random variables with $\E(\widehat{\beta}_{*i})=\beta_{*}$ and  $\Var(\widehat{\beta}_{*i})=\sigma^2_{\beta_{*i}}$, for $i=1,\dots,I$. We must remember the following results of univariate distributions (\cite{mood1974}
):

\begin{enumerate}[label=(\roman*)]

\item If $\widehat{\beta}_{*}$ is normally distributed, than $\frac{(I-1)S^2_{\beta_{*i}}}{\sigma^2_{\beta_{*i}}} \sim \chi^2_{(I-1)}$.

\item Let $X$ and $Y$ be independent random variables, where $X\sim N(0,1)$ and $Y \sim \chi^2_{(\nu)}$, than $\frac{X}{\sqrt{\frac{Y}{v}}} \sim t_{(\nu)}$. 
\end{enumerate}

 Now, we can write

\beqq
\label{eq:univariates}
\frac{\lp\widehat{\betab}_{*i}-\widehat{\overline{\betab}}_*\rp}{S_{\widehat{\beta}_{*i}}}
&=&\frac{\frac{\lp\widehat{\betab}_{*i}-\widehat{\overline{\betab}}_*\rp}{\sigma_{\beta_{*i}}}} {\sqrt{\frac{S^2_{\widehat{\beta}_{*i}}}{\sigma^2_{\beta_{*i}}}}}= \frac{\Delta}{\sqrt{\Gamma}}.    
\eeqq

The quantities $\Delta$ and $\Gamma$ can be unfolded like:

\beqq
\Delta&=& \frac {\lp\widehat{\betab}_{*i}-\betab_*\rp}  {\sigma_{\beta_{*i}}} 
-\frac {\lp\widehat{\overline{\betab}}_* - \betab_*\rp}  {\sigma_{\beta_{*i}}} 
= \frac {\lp\widehat{\betab}_{*i}-\betab_*\rp} 
{\sigma_{\beta_{*i}}} 
-\frac{1}{\sqrt{I}}\frac {\lp\widehat{\overline{\betab}}_* - \betab_*\rp}  {\frac{\sigma_{\beta_{*i}}}{\sqrt{I}}}\nonumber\\ &=& C-\frac{1}{\sqrt{I}}D.
\eeqq

The variables $C$ and $D$ are independent and, for large $T$,  $C\stackrel{.}{\sim} N(0,1)$ and $D\stackrel{.}{\sim} N(0,1)$. Consequently, $\Delta \stackrel{.}{\sim} N\lp 0,\frac{I+1}{I}\rp$ or $\sqrt{(\frac{I}{I+1})}\Delta \stackrel{.}{\sim} N(0,1)$. From (i),

\beqq
\Gamma= {\frac{S^2_{\widehat{\beta}_{*i}}}{\sigma^2_{\beta_{*i}}}} 
= {\frac{(I-1)S^2_{\widehat{\beta}_{*i}}} {\frac{\sigma^2_{\beta_{*i}}}{(I-1)}}}
\stackrel{.}{\sim} \frac{\chi^2_{(I-1)}}{I-1}.
\eeqq

By $\mathrm{(ii)}$ it follows that, 

$$\frac{\Delta}{\sqrt{\Gamma}}\stackrel{.}{\sim}\sqrt{\frac{(I+1)}{I}}t_{(I-1)}, \,\,\mbox{as   }\, T \, \mbox{   increases}.$$
\end{proof}

\begin{Coro}\label{coro1}
 In case of data coming from a normally distribute variable, the distribution in \eqref{eq:betanormal} becomes an exact distribution rather than asymptotic one.
\end{Coro}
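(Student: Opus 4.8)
The plan is to observe that the asymptotic regime (``as $T$ increases'') enters the proof of Theorem~\ref{theo1} at exactly one place: it is used only to guarantee that the OLS estimators $\widehat{\beta}_{*i}$ are normally distributed, which is what makes the auxiliary variables $C$ and $D$ standard normal and what licenses fact (i) about the sampling distribution of $S^2_{\widehat{\beta}_{*i}}$. Every remaining ingredient --- the algebraic splitting $\Delta=C-\tfrac{1}{\sqrt{I}}D$, the independence of $C$ and $D$, the independence of $\Delta$ and $\Gamma$, and the application of fact (ii) --- is purely distributional and valid for any finite $T$. So the corollary should follow by re-reading the proof of Theorem~\ref{theo1} with exact statements in place of asymptotic ones.

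Concretely, I would first invoke the remark at the end of Section~2.1: when the underlying variable (equivalently, the innovation sequence) is normal, the OLS estimator $\widehat{\betab}_i$ has an \emph{exact} multivariate normal distribution for every finite $T$, so in particular each coordinate satisfies $\widehat{\beta}_{*i}\sim N(\beta_*,\sigma^2_{\beta_{*i}})$ and the $I$ trials $\widehat{\beta}_{*1},\dots,\widehat{\beta}_{*I}$ form a genuine normal random sample. From this, classical normal-sample theory delivers fact (i) exactly, namely $(I-1)S^2_{\widehat{\beta}_{*i}}/\sigma^2_{\beta_{*i}}\sim\chi^2_{(I-1)}$, together with the independence of $\widehat{\overline{\beta}}_*$ (hence of $\Delta$) from $S^2_{\widehat{\beta}_{*i}}$.

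Then I would repeat the computation of Theorem~\ref{theo1} verbatim, replacing $\stackrel{.}{\sim}$ by $\sim$ throughout: $C\sim N(0,1)$ and $D\sim N(0,1)$ exactly, hence $\Delta\sim N\!\big(0,(I+1)/I\big)$; $\Gamma=S^2_{\widehat{\beta}_{*i}}/\sigma^2_{\beta_{*i}}\sim\chi^2_{(I-1)}/(I-1)$ by fact (i); and fact (ii) then yields $\Delta/\sqrt{\Gamma}\sim\sqrt{(I+1)/I}\;t_{(I-1)}$, which is precisely the claimed exact distribution in \eqref{eq:betanormal}.

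The only point that needs care --- and the step I would treat as the main obstacle --- is the finite-sample normality claim itself: one must lean on the exact (not merely asymptotic) normality of the OLS coefficient estimates under Gaussian data asserted in Section~2.1, and on the Fisher/Cochran independence of sample mean and sample variance for a normal sample. Once those two facts are granted, no part of the argument depends on $T$, and the conclusion is immediate.
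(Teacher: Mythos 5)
Your proposal is correct and follows essentially the same route as the paper, whose proof of Corollary~\ref{coro1} is simply the observation that the argument of Theorem~\ref{theo1} goes through verbatim once the asymptotic normality of the OLS estimates is replaced by the exact normality asserted at the end of Section~2.1 for Gaussian data. Your write-up merely makes explicit what the paper leaves implicit (that normality of $\widehat{\beta}_{*i}$ is the \emph{only} place $T\to\infty$ is used, and that the $\chi^2$ fact and mean--variance independence are then exact by classical normal-sample theory), so there is no substantive difference in approach.
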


\begin{proof}
The same as in Theorem 1.
\end{proof}

\begin{theorem}\label{theo2}
Let's consider $I$ time series of length $T$ from an ARMA($v,w$) process in \eqref{eq:ARMAequation}, $\betab$  the vector of parameters defined in \eqref{eq:beta} and $\widehat{\betab}_i$ the vector of OLS estimates for the $i^{st}$ sample, both vectors of length $p=v+w+1$. Than,

\beq \label{eq:Teo2}
\lp\widehat{\betab}_i-\widehat{\overline{\betab}}\rp\Sb_{\widehat{\betab}}^{-1}\lp\widehat{\betab}_i-\widehat{\overline{\betab}}\rp\stackrel{.}{\sim}\frac{(I-1)(I+1)p}{I(I-p)p}F_{p,I-p}, \,\,\mbox{as   }\, T \, \mbox{   increases},
\eeq
where \quad $\Sb_{\widehat{\betab}}=\frac{1}{(I-1)}\sum_{i=1}^{I}\lp\widehat{\betab}_i-\widehat{\overline{\betab}}\rp\lp\widehat{\betab}_i-\widehat{\overline{\betab}}\rp'$ \quad \mbox{and} \quad $\widehat{\overline{\betab}}=\frac{1}{I}\sum_{i=1}^{I}\widehat{\betab}_i$.
\end{theorem}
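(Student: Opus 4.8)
The plan is to read Theorem~\ref{theo2} as the multivariate, Hotelling-$\mathcal{T}^2$ counterpart of Theorem~\ref{theo1}, and to build it from three ingredients: the asymptotic (exact, in the normal case) joint normality of the per-batch OLS estimators furnished by \eqref{eq:multivariate}; the Wishart distribution of the scaled pooled sample covariance $\Sb_{\widehat\betab}$; and the classical representation of Hotelling's statistic as a scaled $F$ variable. These occupy, in the $p$-dimensional setting, the roles that facts (i) and (ii) play in the proof of Theorem~\ref{theo1}.

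First I would assemble the distributional building blocks. By \eqref{eq:multivariate}, for $T$ large the vectors $\widehat\betab_1,\dots,\widehat\betab_I$ are approximately $IID$ $N_p(\betab,\Sigmab_\betab)$ (exactly so for normally distributed data). Writing $\widehat\betab_i-\widehat{\overline\betab}=\lp\widehat\betab_i-\betab\rp-\lp\widehat{\overline\betab}-\betab\rp$ and arguing as in Theorem~\ref{theo1} — the first term being $\stackrel{.}{\sim}N_p(\zb,\Sigmab_\betab)$, the second $\stackrel{.}{\sim}N_p(\zb,\tfrac1I\Sigmab_\betab)$, and the two independent — gives
\[
\zbl:=\sqrt{\tfrac{I}{I+1}}\lp\widehat\betab_i-\widehat{\overline\betab}\rp\stackrel{.}{\sim}N_p(\zb,\Sigmab_\betab).
\]
In place of the univariate chi-square fact (i), I would use its multivariate analogue: for an $IID$ normal sample, $(I-1)\Sb_{\widehat\betab}$ follows a Wishart law $W_p(I-1,\Sigmab_\betab)$, independent of $\zbl$.

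Then, as the multivariate analogue of construction (ii), I would invoke the defining property of Hotelling's $\mathcal{T}^2$: if $\zbl\sim N_p(\zb,\Sigmab_\betab)$ and $\Mb\sim W_p(m,\Sigmab_\betab)$ are independent, then $\zbl'\lp\Mb/m\rp^{-1}\zbl\sim\frac{mp}{m-p+1}\,F_{p,\,m-p+1}$. Taking $\Mb=(I-1)\Sb_{\widehat\betab}$ and $m=I-1$ gives $\Mb/m=\Sb_{\widehat\betab}$ and $m-p+1=I-p$, so that
\[
\tfrac{I}{I+1}\lp\widehat\betab_i-\widehat{\overline\betab}\rp'\Sb_{\widehat\betab}^{-1}\lp\widehat\betab_i-\widehat{\overline\betab}\rp\stackrel{.}{\sim}\frac{(I-1)p}{I-p}\,F_{p,I-p}.
\]
Multiplying through by $\tfrac{I+1}{I}$ yields $\lp\widehat\betab_i-\widehat{\overline\betab}\rp'\Sb_{\widehat\betab}^{-1}\lp\widehat\betab_i-\widehat{\overline\betab}\rp\stackrel{.}{\sim}\frac{(I-1)(I+1)p}{I(I-p)}\,F_{p,I-p}$, which is \eqref{eq:Teo2}; for normally distributed data the Wishart and $F$ statements are exact, giving the analogue of Corollary~\ref{coro1}.

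The step I expect to need the most care — and which is handled loosely already in Theorem~\ref{theo1} — is the \emph{independence} of the centred vector $\widehat\betab_i-\widehat{\overline\betab}$ from the pooled covariance $\Sb_{\widehat\betab}$, since $\widehat\betab_i$ enters both $\widehat{\overline\betab}$ and $\Sb_{\widehat\betab}$. In the monitoring setting the clean justification is that the batch indexed by $i$ is a newly observed one, independent of the $I$ historical batches that define $\widehat{\overline\betab}$ and $\Sb_{\widehat\betab}$; this is also exactly what generates the $\tfrac{I+1}{I}$ inflation of the variance. Everything else — unfolding \eqref{eq:multivariate}, the Wishart fact, and the $\mathcal{T}^2$-to-$F$ conversion — is standard multivariate bookkeeping.
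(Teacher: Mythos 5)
Your proposal is correct and follows essentially the same route as the paper's own proof: the decomposition $\widehat\betab_i-\widehat{\overline\betab}=(\widehat\betab_i-\betab)-(\widehat{\overline\betab}-\betab)$ giving the $\tfrac{I+1}{I}$ variance inflation, the Wishart law of $(I-1)\Sb_{\widehat\betab}$ with the independence of $\widehat\betab_i$ and $\Sb_{\widehat\betab}$ (which, as you rightly note, really requires the monitored batch to be independent of the $I$ reference batches), and the standard Hotelling-$\mathcal{T}^2$-to-$F$ conversion with $I-p$ denominator degrees of freedom. Your derived constant $\tfrac{(I-1)(I+1)p}{I(I-p)}$ is the classical one; the additional $p$ in the denominator of \eqref{eq:Teo2} cancels the $p$ in the numerator and appears to be a typographical slip in the statement rather than a substantive difference.
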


\begin{proof}
Note that $\widehat{\betab}_1,\dots,\widehat{\betab}_{I}$ are $IID$ random vectors with $\E(\widehat{\betab}_i)=\betab$ and  $\Var(\widehat{\betab}_i)=\Sigma_{\betab}$, for $i=1,\dots,I$.  We must remember the following results of multivariate distributions (\cite{johnson2002}):

 \begin{enumerate}[label=(\roman*)]

\item If $\widehat{\betab}_i$ is normally distributed, than  $(I-1)\Sb_{\widehat{\betab}}\sim \Wb_{p(I-1)}\Sigmab_{\betab}$ ($\Wb$ is the Wishart distribution); 
    
\item $\widehat{\betab}_i$ and $\Sb_{\widehat{\betab}}$ are independent; 

\item It follows from $\mathrm{(i)}$ and $\mathrm{(ii)}$ that $\lp\widehat{\betab}_i-\betab\rp\Sb_{\widehat{\betab}}^{-1}\lp\widehat{\betab}_i-\betab\rp\sim \frac{(I-1)p}{I-p}F_{p,(I-1)}$.
\end{enumerate}

Now, let's find the distribution of $(\widehat{\betab}_i-\widehat{\overline{\betab}})$:

\beqq
\widehat{\betab}_i-\widehat{\overline{\betab}}&=& \lp\widehat{\betab}_i-\betab\rp-\left(\widehat{\overline{\betab}}-\betab \right) 
= \lp\widehat{\betab}_i-\betab\rp-\lp\frac{1}{I}\sum_{i=1}^{I}\widehat{\betab}_i-\betab\rp\nonumber\\
&=&\lp\widehat{\betab}_i-\betab\rp-\frac{1}{I}\sum_{i=1}^{I}\lp\widehat{\betab}_i-\betab\rp.
\eeqq  
It follows from \eqref{eq:multivariate} that

\beqq
\lp\widehat{\betab}_i-\betab\rp \stackrel{.}{\sim} N_p(\bf{0},\Sigmab_{\betab})
\eeqq 
and

\beqq\frac{1}{I}\sum_{i=1}^{I}\left(\widehat{\betab}_i-\betab\right) \stackrel{.}{\sim} N_p\lp\textbf{0},\frac{1}{I}\Sigmab_{\betab}\rp.
\eeqq.

So, 
\beqq
\lp\widehat{\betab}_i-\widehat{\overline{\betab}}\rp \stackrel{.}{\sim}  N_p\lp\textbf{0},\lc\frac{I+1}{I}\rc\Sigmab_{\betab}\rp, 
\eeqq
or

\beqq
\sqrt{\frac{I}{I+1}}\lp\widehat{\betab}_i-\widehat{\overline{\betab}}\rp \stackrel{.}{\sim} N_p\lp\bf{0},\Sigmab_{\betab}\rp.
\eeqq
Rewriting $\mathrm{(iii)}$ explicitly in terms of probability distributions and considering \eqref{eq:multivariate}, we have

\beqq
N_p\lp\bf{0},\Sigmab_{\betab}\rp'\lp\frac{\Wb_{p(I-1)}\Sigmab_{\betab}}{I-1} \rp N_p\lp\bf{0},\Sigmab_{\betab}\rp \stackrel{.}\sim  \frac{(I-1)p}{I-p}F_{p,(I-1)}.
 \eeqq

Finally, from the equation above we can write $\mathrm{(iii)}$ like:

\beqq
\sqrt{\frac{I}{I+1}} N_p\lp\bf{0},\Sigmab_{\betab}\rp'\lp\frac{\Wb_{p(I-1)}\Sigmab_{\betab}}{I-1} \rp \sqrt{\frac{I}{I+1}} N_p\lp\bf{0},\Sigmab_{\betab}\rp,  
 \eeqq
or

\beqq
\frac{I}{I+1} N_p\lp\bf{0},\Sigmab_{\betab}\rp'\lp\frac{\Wb_{p(I-1)}\Sigmab_{\betab}}{I-1} \rp N_p\lp\bf{0},\Sigmab_{\betab}\rp.   
\eeqq

Now we can note that the quantity $\lp\widehat{\betab}_i-\widehat{\overline{\betab}}\rp\Sb_{\widehat{\betab}}^{-1}\lp\widehat{\betab}_i-\widehat{\overline{\betab}}\rp$ has the following probability distribution:

\beqq 
N_p\lp\bf{0},\Sigmab_{\betab}\rp'\lp\frac{\Wb_{p(I-1)}\Sigmab_{\betab}}{I-1} \rp N_p\lp\bf{0},\Sigmab_{\betab}\rp  \stackrel{.}\sim  \frac{(I+1)}{I} \frac{(I-1)p}{(I-p)}F_{p,(I-1)}, \,\,\mbox{as   }\, T \, \mbox{   increases}.
\eeqq
\end{proof}

\begin{Coro}\label{coro2}
 In case of data coming from a normally distributed variable, the distribution in \eqref{eq:Teo2} becomes an exact distribution rather than asymptotic one.
\end{Coro}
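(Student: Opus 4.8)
The plan is to notice that the proof of Theorem~\ref{theo2} invokes an asymptotic argument at exactly one point, namely the input \eqref{eq:multivariate}, i.e.\ the fact that $(\widehat{\betab}_i-\betab)$ is only \emph{asymptotically} $N_p(\mathbf{0},\Sigmab_{\betab})$ as $T$ grows. Every other distributional fact used there is a finite-sample identity valid for any $I>p$ once the vectors $\widehat{\betab}_1,\dots,\widehat{\betab}_I$ are exactly i.i.d.\ multivariate normal: the Wishart law $(I-1)\Sb_{\widehat{\betab}}\sim \Wb_{p(I-1)}\Sigmab_{\betab}$ from $\mathrm{(i)}$, the independence of $\widehat{\betab}_i$ and $\Sb_{\widehat{\betab}}$ from $\mathrm{(ii)}$, and the $F$-representation $\lp\widehat{\betab}_i-\betab\rp\Sb_{\widehat{\betab}}^{-1}\lp\widehat{\betab}_i-\betab\rp\sim\frac{(I-1)p}{I-p}F_{p,(I-1)}$ from $\mathrm{(iii)}$.

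Accordingly, I would argue as follows. By the remark following \eqref{eq:univariate}, when the data come from a normally distributed variable the OLS coefficient estimates satisfy the \emph{exact} law $(\widehat{\betab}_i-\betab)\sim N_p(\mathbf{0},\Sigmab_{\betab})$ rather than merely the asymptotic relation \eqref{eq:multivariate}. Substituting this exact statement at the start of the proof of Theorem~\ref{theo2}, the derivation of the law of $\lp\widehat{\betab}_i-\widehat{\overline{\betab}}\rp$, namely $N_p\lp\mathbf{0},\lc\frac{I+1}{I}\rc\Sigmab_{\betab}\rp$, now holds exactly; rescaling by $\sqrt{I/(I+1)}$ and combining with $\mathrm{(i)}$–$\mathrm{(iii)}$ reproduces \eqref{eq:Teo2} verbatim, but with $\stackrel{.}{\sim}$ replaced by $\sim$ and the qualifier ``as $T$ increases'' dropped. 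This is the same mechanism by which Corollary~\ref{coro1} follows from Theorem~\ref{theo1}, so the formal proof can simply read ``the same as in Theorem~\ref{theo2}.''

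I do not expect a genuine obstacle here: the entire content of the corollary is inherited from the exact normality of $\widehat{\betab}_i$ that the paper has already adopted in Section~2.1. The one point I would flag in a fully rigorous write-up is the justification that Gaussian white noise yields an exact (not just asymptotic) multivariate normal sampling law for the OLS estimates of the ARMA coefficients — e.g.\ by conditioning on the regressor matrix in the autoregressive representation — but since this is precisely the assumption invoked after \eqref{eq:univariate}, here I would cite it and conclude.
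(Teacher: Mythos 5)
Your proposal is correct and matches the paper's own (one-line) proof, which simply says the argument is the same as in Theorem~\ref{theo2} with the exact normal law of $\widehat{\betab}_i$ substituted for the asymptotic one in \eqref{eq:multivariate}. Your added caveat about justifying exact normality of the OLS ARMA coefficient estimates under Gaussian noise is a fair point, but it is precisely the assumption the paper adopts after \eqref{eq:univariate}, so nothing further is required here.
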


\begin{proof}
The same as in Theorem 2.
\end{proof}

\subsection{Dynamic ARMA-based control charts} 

 Consider a historical data set of $I$ batches yielding products compliant with specifications. For each batch we have a time series representing the trajectory of one variable, measured at $T$ time-instants, from the process under normal regime (in-control sample batches). Let's assume that the variable dynamics can be described by the ARMA process. 
   
 In order to find a reference distribution of the ARMA ($v,w$) coefficient estimates in Phase $\mathcal{I}$, we firstly save the OLS vector of estimates $\widehat{\betab}_i$ for each batch. Considering that $E(\widehat{\betab}_i)=\widehat{\betab}$ and $E(\Sb_{\widehat{\betab}})={\Sigmab}_{\betab}$, in the next step, we build the unique estimates of the mean and the covariance of $\widehat{\betab}_i$ by combining the individual estimates like:
  
  \beq \widehat{\overline{\betab}}= \frac{1}{I} \sum_{i=1}^{I} \widehat{\betab}_{i} \quad \mbox{and} \quad \Sb_{\widehat{\betab}}=\frac{1}{(I-1)}\sum_{i=1}^{I}(\widehat{\betab}_i-\widehat{\overline{\betab}})(\widehat{\betab}_i-\widehat{\overline{\betab}})'. \eeq

 These estimates hold relevant information about the variable dynamic (serial correlation) of the process operating in a normal regime. In Phase $\mathcal{II}$ we propose one approach based on the modified \textit{Hotelling} $\mathcal{T}^{2}$ statistic to monitor the future batch samples. We have shown in Theorem \ref{theo2} that
 
\beq \label{eq:hot}
\mathcal{T}_{\betab}^{2}=(\widehat{\betab}_i-\widehat{\overline{\betab}})\Sb_{\widehat{\betab}}^{-1}(\widehat{\betab}_i-\widehat{\overline{\betab}})\stackrel{.}{\sim}\frac{(I-1)(I+1)p}{I(I-p)p}F_{p,I-p},
\eeq
where $p=v+w+1$. Scores above the $\alpha$ percentile in $\mathcal{T}_{\betab}^2$ imply that the variable dynamics in a new batch are different to their expected behaviour for the in-control process.
Once the $\mathcal{T}_{\betab}^{2}$ chart pointed out a batch sample out of limit, we can investigate the coefficients most affected by using the control chart based on the modified \textit{t-Student} distribution. We have shown in Theorem \ref{theo1} that

\beq\label{eq:stud}
t_{\betab}=\frac{(\widehat{\betab}_{*i}-\widehat{\overline{\betab}}_*)}{S_{\widehat{\beta}_{*i}}}\stackrel{.}{\sim} \sqrt{\frac{(I+1)}{I}}t_{(I-1)},
\eeq
where $\widehat{\betab}_{*i}$ and $\widehat{\overline{\betab}}_*$ are elements of the vectors  $\widehat{\betab}_{i}$ and $\widehat{\overline{\betab}}$, respectively. $S^2_{\widehat{\beta}_{*i}}$ is an element of the main diagonal of $\Sb_{\widehat{\betab}}$. Scores above the $\alpha$ percentile in $t_{\betab}$ can signalize a disturbance in the dynamic caused by the changing in a specific coefficient.

\section{Simulation study}

In this Section, we generate batch processes in which the dynamic is described by an ARMA($v$,$w$) model.  In order to illustrate our method, we present a Monte Carlo simulation using varieties of this model, including combinations of $v,w=0,1,2$. Both models with intercept term. The model with the highest number of parameters in this study is an ARMA(2,2), explicitly written as 
 
\beq 
x_t=\phi_0+\phi_{1}x_{t-1}+\phi_{2}x_{t-2}+\epsilon_{t}+\theta_{1}\epsilon_{t-1}+\theta_{2}\epsilon_{t-2}
\eeq
with the vector of parameters $\betab=[\phi_0,\phi_1,\phi_2,\theta_1, \theta_2]$. Table \ref{tab:tbnew} show the set of ARMA parameters for in-control process and the simulation settings. In phase $\mathcal{II}$, we considering scenarios with a wide range of disturbances in the intercept term $\phi_{0}$ and in the AR/MA part of the model, represented by $\phi_{1}$ and $\theta_{1}$, respectively. 

\begin{table}[h!]\caption{Simulation settings }\label{tab:tbnew}
\begin{center}
\scriptsize
\begin{tabular}{|c|c|c|c|c|c|c|c|}
\hline
ARMA & ARMA & Disturbed & Disturbance& \# Batches &\# Batches & Batch & \multirow{2}{*
}{Run}\\
coefficients & settings  & parameter &  levels &phase $\mathcal{I}$ & phase $\mathcal{II}$& length $T$\\
\hline
$\phi_{0}$ & 1 &  X & $0,0.5,0.8,1,1.2,1.5,2$& & && \\
\cline{1-4}
$\phi_{1}$ & 0.2 &X & $-0.2,0,0.1,0.2,0.3$& & & &\\
\cline{1-4}
$\phi_{2}$ & 0.5 & & & 30, 50,  & 500 & 100, 200,  & 1000\\
\cline{1-4}
$\theta_{1}$ & 0.5 &X & $0,0.3,0.4,0.5,0.6,0.8$ & 100& &500, 1000&\\
\cline{1-4}
$\theta_{2}$ & -0.3& & & & && \\

\hline
\end{tabular}
\end{center}
\end{table}   
  
 We generate scenarios including different numbers of batches with different time-length (from 100 to 1000). Each scenario was replicated 1000 times. In phase $\mathcal{I}$ we do variate the number of batches from 30 to 100. The $\mathcal{T}_{\betab}^2$ \eqref{eq:hot} and $t_{\betab}$ \eqref{eq:stud} charts were setting to the false alarm probability of $\alpha=0.01$. In phase $\mathcal{II}$ 500 batches were generated in each scenario. The rate of batches beyond the control ($r$) and the $ARL$ index (\textit{Average Run Length}) were adopted to evaluate the chart's performance, where $ARL=1/r$. The  $ARL_{0}$ is the average number of batches until a false alarm (for $\alpha=1\%$, $ARL_{0}=100$), i.e., points above control limits in the process without disturbances (in-control process). In contrast, $ARL_{1}$ is the average number of samples until an out-of-control batch falls outside the control limits. The former is a measure of the chart's sensibility.

As a benchmark approach, we consider the usual way to build time series-based control charts for monitoring the residuals from the fitted model in phase $\mathcal{I}$ \cite{Montgomery2007} by adapting this methodology for the case of batch processes. We use as the variable the residual mean $\overline{e}$ from the $T-p$ residuals for each batch, where $p=v+w+1$ for the ARMA($v,w$) model with intercept. We know that if a new batch comes from the in-control process, $\overline{e}\stackrel{.}{\sim}N(0,1/\sqrt{T-p})$. We set the limits of the $t_{\eb}$ residual control chart using the probability of false alarm of $\alpha=1\%$. The EWMA approach \cite{Montgomery2007} is used in order to improve the power of the $t_{\eb}$ chart to detect disturbances representing small changes in the residual mean. Simulations and calculations were conducted using $R$ \cite{R}. 

Tables \ref{Tab2} to \ref{Tab4}  summarize the results of $\mathcal{T}_{\betab}^2$ and $t_{\eb}$ charts for an ARMA(1,1) model with the in-control parameters set in Table \ref{tab:tbnew}. The tables show the mean and standard deviation of $ARL$ values for each disturbance. The scenarios in Tables \ref{Tab2} and \ref{Tab3} are very similar in terms of results and so they will not be commented apart. These Tables include disturbances in AR coefficient  $\phi_{1}$ and MA coefficient $\theta_{1}$, each one at the time. The observed $ARL_{0}$ is close to the chosen nominal value of $100$, which is consistent with the fact that no disturbance was introduced in the process. The  $ARL_{1}$ values show that the $\mathcal{T}_{\betab}^2$ chart outperforms the $t_{\eb}$ in detecting disturbances of different intensities. Additionally, we notice that the degree of detection in $\mathcal{T}_{\betab}^2$ chart increases faster as the perturbations get more intense. Even for the higher values of disturbances, the performance of our approach remains better than the residual-based charts. We emphasize here the power of the proposed approach (based on estimates of correlations in $\phi$) to capture information about process dynamics.

\begin{table}[!ht] 
\caption{ARMA(1,1): Mean  $(\hat{\mu})$ and standard deviation  $(\hat{\sigma})$ of
 $ARL_0$ ($\phi_{1}=0.2$) and $ARL_1$ values for disturbances in the AR coefficient $\phi_{1}$}
 \label{Tab2}
\scriptsize
\centering
\begin{tabular}{c|c|cccc|cccc|cccc}
  \toprule
      \multirow{3}{*}{$\phi_1$} & \multirow{3}{*}{$n$} & \multicolumn{12}{c}{I} \\
\cline{3-14}
    &&\multicolumn{4}{c|}{10} &
     \multicolumn{4}{c|}{30 } &
      \multicolumn{4}{c}{100 } \\

\cline{3-14}
  &  & $\mathcal{T}_{\betab}^2(\hat{\mu})$ & $\mathcal{T}_{\betab}^2(\hat{\sigma})$ & $t_{\eb}(\hat{\mu})$&
  $t_{\eb}(\hat{\sigma})$&$\mathcal{T}_{\betab}^2(\hat{\mu})$ & $\mathcal{T}_{\betab}^2(\hat{\sigma})$ & $t_{\eb}(\hat{\mu})$& $t_{\eb}(\hat{\sigma})$& $\mathcal{T}_{\betab}^2(\hat{\mu})$ & $\mathcal{T}_{\betab}^2(\hat{\sigma})$ & $t_{\eb}(\hat{\mu})$& $t_{\eb}(\hat{\sigma})$\\
  \hline
     & 100 & 1.55 & 0.41 & 207.10 & 198.22 & 1.37 & 0.23 & 261.99 & 206.90 & 1.32 & 0.12 & 301.52 & 195.96 \\                 
-0.2 & 200 & 1.03 & 0.04 & 108.04 & 136.28 & 1.02 & 0.02 & 245.43 & 190.74 & 1.01 & 0.01 & 340.91 & 159.01 \\                 
     & 500 & 1.00 & 0.00 & 206.38 & 191.31 & 1.00 & 0.00 & 213.12 & 176.74 & 1.00 & 0.00 & 272.73 & 154.07 \\                 
     &1000 & 1.00 & 0.00 & 235.10 & 197.34 & 1.00 & 0.00 & 244.09 & 199.38 & 1.00 & 0.00 & 315.15 & 186.85 \\                 
\hline                                                                                                                       
     & 100 & 10.47 & 7.85 & 198.35 & 184.84 & 7.69 & 3.47 & 223.94 & 183.48 & 6.67 & 2.35 & 242.15 & 184.06 \\                
0.0  & 200 & 3.88 & 2.66 & 158.03 & 164.89 & 2.89 & 0.97 & 224.23 & 184.59 & 2.64 & 0.59 & 317.85 & 174.84 \\                 
     & 500 & 1.24 & 0.20 & 142.02 & 161.94 & 1.17 & 0.11 & 208.81 & 167.39 & 1.11 & 0.05 & 320.03 & 192.10 \\                 
     &1000 & 1.00 & 0.01 & 160.56 & 168.54 & 1.00 & 0.01 & 210.58 & 188.11 & 1.00 & 0.00 & 292.56 & 183.92 \\                
\hline                                                                                                                       
     & 100 & 50.61 & 53.17 & 163.09 & 170.65 & 30.24 & 19.69 & 169.52 & 151.33 & 29.49 & 20.40 & 223.92 & 177.65 \\           
 0.1 & 200 & 37.62 & 39.76 & 136.60 & 136.04 & 19.03 & 11.79 & 176.14 & 162.34 & 16.92 & 8.64 & 219.03 & 164.46 \\            
     & 500 & 8.31 & 6.87 & 158.23 & 169.50 & 7.07 & 7.29 & 164.81 & 150.12 & 4.75 & 1.42 & 268.94 & 183.23 \\                 
     &1000 & 2.88 & 1.67 & 193.80 & 193.80 & 2.19 & 0.86 & 185.32 & 171.76 & 1.92 & 0.32 & 236.66 & 170.04 \\               
      \hline                                                                                                                       
      & 100 & 143.14 & 149.41 & 116.17 & 144.96 & 114.40 & 107.31 & 132.28 & 153.58 & 106.96 & 97.14 & 182.32 & 168.63 \\      
 0.2 & 200 & 154.88 & 146.54 & 125.29 & 160.91 & 129.26 & 112.50 & 111.44 & 129.12 & 121.95 & 109.16 & 180.62 & 152.30 \\     
      & 500 & 198.93 & 167.21 & 116.80 & 150.20 & 158.91 & 145.74 & 171.53 & 174.05 & 132.75 & 116.87 & 179.84 & 150.86 \\     
      &1000 & 145.57 & 139.15 & 121.50 & 157.18 & 195.10 & 160.02 & 152.59 & 155.07 & 149.26 & 127.64 & 190.92 & 166.38 \\    
\hline                                                                                                                       
     & 100 & 89.42 & 123.93 & 74.85 & 113.05 & 64.15 & 86.76 & 104.44 & 125.94 & 36.17 & 21.05 & 88.44 & 111.81 \\            
0.3  & 200 & 65.90 & 110.76 & 66.32 & 100.31 & 32.98 & 52.26 & 101.62 & 120.87 & 20.54 & 10.19 & 104.85 & 129.28 \\           
     & 500 & 10.05 & 7.85 & 94.77 & 135.57 & 6.66 & 3.24 & 95.23 & 126.16 & 5.51 & 1.94 & 89.50 & 113.80 \\                   
     &1000 & 2.91 & 1.58 & 99.17 & 123.37 & 2.29 & 0.78 & 118.02 & 140.85 & 1.92 & 0.37 & 115.91 & 133.89 \\                 
\hline                                                                                                                       
     & 100 & 1.46 & 0.44 & 7.20 & 5.67 & 1.30 & 0.21 & 7.26 & 4.08 & 1.23 & 0.12 & 6.77 & 2.66 \\                             
0.6  & 200 & 1.02 & 0.03 & 6.68 & 4.59 & 1.01 & 0.01 & 6.98 & 3.81 & 1.00 & 0.00 & 6.72 & 2.42 \\                             
     & 500 & 1.00 & 0.00 & 5.76 & 2.33 & 1.00 & 0.00 & 6.63 & 2.90 & 1.00 & 0.00 & 7.03 & 2.68 \\                             
     &1000 & 1.00 & 0.00 & 7.52 & 6.95 & 1.00 & 0.00 & 8.04 & 10.00 & 1.00 & 0.00 & 7.46 & 3.39 \\          
\bottomrule
\end{tabular}
\end{table}

\begin{table}[!ht] 
\caption{ARMA(1,1): Mean  $(\hat{\mu})$ and standard deviation  $(\hat{\sigma})$ of
 $ARL_0$ ($\theta_{1}=0.5$) and $ARL_1$ values for disturbances in the MA coefficient $\theta_{1}$}
 \label{Tab3}
\scriptsize
\centering
\begin{tabular}{c|c|cccc|cccc|cccc}
  \toprule
      \multirow{3}{*}{$\theta_1$} & \multirow{3}{*}{$n$} & \multicolumn{12}{c}{I} \\
\cline{3-14}
    &&\multicolumn{4}{c|}{10} &
      \multicolumn{4}{c|}{30 } &
      \multicolumn{4}{c}{100 } \\

\cline{3-14}
  &  & $\mathcal{T}_{\betab}^2(\hat{\mu})$ & $\mathcal{T}_{\betab}^2(\hat{\sigma})$ & $t_{\eb}(\hat{\mu})$&
  $t_{\eb}(\hat{\sigma})$&$\mathcal{T}_{\betab}^2(\hat{\mu})$ & $\mathcal{T}_{\betab}^2(\hat{\sigma})$ & $t_{\eb}(\hat{\mu})$& $t_{\eb}(\hat{\sigma})$& $\mathcal{T}_{\betab}^2(\hat{\mu})$ & $\mathcal{T}_{\betab}^2(\hat{\sigma})$ & $t_{\eb}(\hat{\mu})$& $t_{\eb}(\hat{\sigma})$\\
\hline
     & 100  & 1.07 & 0.07 & 229.47 & 211.46 & 1.05 & 0.03 & 231.92 & 197.23 & 1.04 & 0.02 & 287.27 & 187.26 \\                
0.0  & 200  & 1.00 & 0.00 & 257.31 & 193.96 & 1.00 & 0.00 & 171.38 & 186.29 & 1.00 & 0.00 & 359.66 & 190.02 \\                
     & 500  & 1.00 & 0.00 & 218.69 & 189.65 & 1.00 & 0.00 & 276.63 & 208.32 & 1.00 & 0.00 & 272.92 & 149.85 \\                
     &1000  & 1.00 & 0.00 & 257.65 & 223.60 & 1.00 & 0.00 & 314.74 & 180.34 & 1.00 & 0.00 & 427.78 & 148.14 \\                
\hline                                                                                                                        
     & 100  & 6.23 & 3.60 & 161.44 & 165.48 & 5.21 & 1.95 & 201.47 & 177.20 & 4.65 & 1.25 & 241.61 & 184.57 \\                
0.3  & 200  & 2.58 & 1.13 & 153.24 & 181.85 & 2.25 & 0.67 & 216.95 & 194.58 & 2.00 & 0.32 & 316.91 & 188.82 \\                
     & 500  & 1.12 & 0.10 & 161.22 & 164.04 & 1.07 & 0.06 & 203.95 & 177.16 & 1.05 & 0.03 & 288.90 & 186.49 \\                
     &1000  & 1.00 & 0.00 & 176.50 & 174.51 & 1.00 & 0.00 & 217.00 & 189.18 & 1.00 & 0.00 & 289.73 & 188.91 \\                
\hline                                                                                                                        
     & 100  & 26.83 & 20.14 & 162.86 & 187.89 & 24.73 & 15.73 & 195.65 & 182.32 & 20.11 & 8.02 & 238.83 & 172.32 \\           
 0.4 & 200  & 17.30 & 15.87 & 148.62 & 176.75 & 13.48 & 8.29 & 192.74 & 178.25 & 11.14 & 5.23 & 254.78 & 191.96 \\            
     & 500  & 6.04 & 4.39 & 137.97 & 160.31 & 4.05 & 2.01 & 207.24 & 182.84 & 3.44 & 0.88 & 188.90 & 138.53 \\                
     &1000  & 2.03 & 0.87 & 145.61 & 165.84 & 1.67 & 0.33 & 181.28 & 171.34 & 1.53 & 0.22 & 219.33 & 167.85 \\                
\hline                                                                                                                        
     & 100  & 143.14 & 149.41 & 116.17 & 144.96 & 114.40 & 107.31 & 132.28 & 153.58 & 106.96 & 97.14 & 182.32 & 168.63 \\     
 0.5 & 200  & 154.88 & 146.54 & 125.29 & 160.91 & 129.26 & 112.50 & 111.44 & 129.12 & 121.95 & 109.16 & 180.62 & 152.30 \\    
    & 500  & 198.93 & 167.21 & 116.80 & 150.20 & 158.91 & 145.74 & 171.53 & 174.05 & 132.75 & 116.87 & 179.84 & 150.86 \\    
     &1000  & 145.57 & 139.15 & 121.50 & 157.18 & 195.10 & 160.02 & 152.59 & 155.07 & 149.26 & 127.64 & 190.92 & 166.38 \\    
\hline                                                                                                                        
     & 100  & 76.74 & 102.94 & 96.58 & 126.08 & 59.45 & 61.19 & 89.62 & 116.01 & 46.27 & 54.90 & 127.97 & 139.03 \\           
0.6  & 200  & 46.68 & 80.02 & 96.13 & 129.03 & 24.11 & 21.61 & 97.20 & 117.70 & 20.85 & 12.69 & 170.86 & 165.75 \\            
     & 500  & 11.66 & 49.70 & 99.45 & 127.41 & 4.95 & 2.69 & 117.78 & 129.75 & 4.05 & 1.21 & 158.33 & 154.58 \\               
     &1000  & 2.09 & 1.18 & 105.79 & 153.59 & 1.69 & 0.48 & 102.90 & 127.67 & 1.56 & 0.27 & 163.78 & 159.80 \\                
\hline                                                                                                                        
     & 100  & 2.80 & 1.70 & 70.04 & 102.65 & 2.16 & 0.72 & 70.07 & 88.11 & 2.00 & 0.47 & 75.62 & 98.99 \\                     
0.8  & 200  & 1.12 & 0.19 & 64.76 & 100.40 & 1.08 & 0.08 & 88.80 & 116.40 & 1.04 & 0.03 & 92.68 & 108.22 \\                   
     & 500  & 1.00 & 0.00 & 67.82 & 105.03 & 1.00 & 0.00 & 80.83 & 132.12 & 1.00 & 0.00 & 80.54 & 107.80 \\                   
     &1000  & 1.00 & 0.00 & 80.70 & 127.66 & 1.00 & 0.00 & 71.50 & 99.24 & 1.00 & 0.00 & 97.54 & 111.18 \\                    

\bottomrule
\end{tabular}
\end{table}

Table \ref{Tab4} shows the in-control process based on the ARMA(1,1) model and  disturbances included in the intercept parameter $\phi_{0}$, in which represent levels of change in the process mean, i.e., those time series are generated from different mean drifts. The $t_{\eb}$ is well suitable to capture this kind of change, as expected. Even being considered as the \textit{underdog} in this scenario, we noticed the $\mathcal{T}_{\betab}^2$ performance close to the residual one for the highest number of reference batches with the highest time-instants in any disturbances. In other words, even for the changes in the mean, instead of in the data dynamic, our approach seems to work well. 

\begin{table}[!ht] 
\caption{ARMA(1,1): Mean  $(\hat{\mu})$ and standard deviation  $(\hat{\sigma})$ of
 $ARL_0$ ($\phi_{0}=1$) and $ARL_1$ values for disturbances in in the intercept $\phi_{0}$}
 \label{Tab4}
\scriptsize
\centering
\begin{tabular}{c|c|cccc|cccc|cccc}
  \toprule
      \multirow{3}{*}{$\phi_0$} & \multirow{3}{*}{$n$} & \multicolumn{12}{c}{I} \\
\cline{3-14}
    &&\multicolumn{4}{c|}{10} &
      \multicolumn{4}{c|}{30 } &
      \multicolumn{4}{c}{100 } \\

\cline{3-14}
  &  & $\mathcal{T}_{\betab}^2(\hat{\mu})$ & $\mathcal{T}_{\betab}^2(\hat{\sigma})$ & $t_{\eb}(\hat{\mu})$&
  $t_{\eb}(\hat{\sigma})$&$\mathcal{T}_{\betab}^2(\hat{\mu})$ & $\mathcal{T}_{\betab}^2(\hat{\sigma})$ & $t_{\eb}(\hat{\mu})$& $t_{\eb}(\hat{\sigma})$& $\mathcal{T}_{\betab}^2(\hat{\mu})$ & $\mathcal{T}_{\betab}^2(\hat{\sigma})$ & $t_{\eb}(\hat{\mu})$& $t_{\eb}(\hat{\sigma})$\\
  \hline   
     & 100 & 1.07 & 0.08 & 1.00 & 0.00 & 1.04 & 0.03 & 1.00 & 0.00 & 1.02 & 0.01 & 1.00 & 0.00 \\                           
0.0  & 200 & 1.00 & 0.00 & 1.00 & 0.00 & 1.00 & 0.00 & 1.00 & 0.00 & 1.00 & 0.00 & 1.00 & 0.00 \\                           
     & 500 & 1.00 & 0.00 & 1.00 & 0.00 & 1.00 & 0.00 & 1.00 & 0.00 & 1.00 & 0.00 & 1.00 & 0.00 \\                           
     &1000 & 1.00 & 0.00 & 1.00 & 0.00 & 1.00 & 0.00 & 1.00 & 0.00 & 1.00 & 0.00 & 1.00 & 0.00 \\                           
\hline                                                                                                                      
     & 100 & 4.93 & 3.12 & 1.00 & 0.00 & 3.89 & 1.63 & 1.00 & 0.00 & 3.12 & 0.76 & 1.00 & 0.00 \\                           
0.5  & 200 & 1.91 & 0.74 & 1.00 & 0.00 & 1.52 & 0.32 & 1.00 & 0.00 & 1.44 & 0.23 & 1.00 & 0.00 \\                           
     & 500 & 1.02 & 0.03 & 1.00 & 0.00 & 1.01 & 0.01 & 1.00 & 0.00 & 1.00 & 0.00 & 1.00 & 0.00 \\                           
     &1000 & 1.00 & 0.00 & 1.00 & 0.00 & 1.00 & 0.00 & 1.00 & 0.00 & 1.00 & 0.00 & 1.00 & 0.00 \\                           
\hline                                                                                                                      
     & 100 & 63.39 & 77.90 & 3.85 & 8.47 & 38.90 & 32.02 & 2.41 & 1.86 & 30.80 & 16.68 & 2.30 & 1.62 \\                     
 0.8 & 200 & 40.05 & 74.13 & 1.26 & 0.94 & 20.85 & 16.44 & 1.14 & 0.17 & 16.60 & 8.25 & 1.13 & 0.13 \\                      
     & 500 & 8.68 & 7.46 & 1.01 & 0.04 & 5.22 & 2.15 & 1.01 & 0.01 & 4.63 & 1.51 & 1.00 & 0.00 \\                           
     &1000 & 2.61 & 1.48 & 1.00 & 0.00 & 1.93 & 0.55 & 1.00 & 0.00 & 1.75 & 0.29 & 1.00 & 0.00 \\                           
\hline                                                                                                                      
     & 100 & 163.85 & 158.72 & 143.12 & 165.64 & 107.67 & 90.98 & 156.41 & 171.88 & 94.23 & 78.88 & 176.62 & 159.59 \\      
 1.0 & 200 & 141.20 & 116.93 & 91.45 & 135.41 & 141.52 & 118.74 & 158.14 & 157.23 & 116.36 & 89.73 & 208.97 & 178.09 \\     
     & 500 & 174.39 & 151.85 & 125.42 & 158.62 & 154.68 & 135.68 & 161.39 & 178.08 & 120.63 & 90.08 & 207.97 & 181.85 \\    
     &1000 & 175.58 & 158.47 & 118.26 & 158.62 & 164.63 & 129.10 & 167.33 & 167.43 & 138.47 & 105.63 & 207.66 & 180.26 \\   
\hline                                                                                                                      
     & 100 & 66.71 & 91.16 & 2.81 & 3.92 & 41.20 & 34.88 & 2.00 & 2.42 & 31.48 & 14.39 & 1.52 & 0.45 \\                     
1.2  & 200 & 44.10 & 76.60 & 1.13 & 0.27 & 22.69 & 19.12 & 1.08 & 0.09 & 15.15 & 6.78 & 1.08 & 0.11 \\                      
     & 500 & 7.20 & 7.47 & 1.00 & 0.00 & 5.25 & 2.64 & 1.00 & 0.00 & 4.53 & 1.66 & 1.00 & 0.00 \\                           
     &1000 & 2.55 & 1.51 & 1.00 & 0.00 & 1.94 & 0.52 & 1.00 & 0.00 & 1.76 & 0.33 & 1.00 & 0.00 \\                           
\hline                                                                                                                      
     & 100 & 5.81 & 4.59 & 1.00 & 0.00 & 4.11 & 1.67 & 1.00 & 0.00 & 3.33 & 0.98 & 1.00 & 0.00 \\                           
1.5  & 200 & 1.83 & 0.86 & 1.00 & 0.00 & 1.62 & 0.44 & 1.00 & 0.00 & 1.38 & 0.19 & 1.00 & 0.00 \\                           
     & 500 & 1.02 & 0.03 & 1.00 & 0.00 & 1.01 & 0.01 & 1.00 & 0.00 & 1.00 & 0.00 & 1.00 & 0.00 \\                           
     &1000 & 1.00 & 0.00 & 1.00 & 0.00 & 1.00 & 0.00 & 1.00 & 0.00 & 1.00 & 0.00 & 1.00 & 0.00 \\                           
\hline                                                                                                                      
     & 100 & 1.07 & 0.06 & 1.00 & 0.00 & 1.03 & 0.02 & 1.00 & 0.00 & 1.02 & 0.01 & 1.00 & 0.00 \\                           
2.0  & 200 & 1.00 & 0.00 & 1.00 & 0.00 & 1.00 & 0.00 & 1.00 & 0.00 & 1.00 & 0.00 & 1.00 & 0.00 \\                           
     & 500 & 1.00 & 0.00 & 1.00 & 0.00 & 1.00 & 0.00 & 1.00 & 0.00 & 1.00 & 0.00 & 1.00 & 0.00 \\                           
     &1000 & 1.00 & 0.00 & 1.00 & 0.00 & 1.00 & 0.00 & 1.00 & 0.00 & 1.00 & 0.00 & 1.00 & 0.00 \\  
     \bottomrule
\end{tabular}
\end{table}

Tables S1 to S4 in the Supplementary Material show the results of Monte Carlo Simulations for ARMA(2,2), AR(1) and MA(1), respectively. They are very similar compared to the study presented in Tables \ref{Tab2} to \ref{Tab4}. 
\clearpage
\pagebreak

\subsection{Monitoring and diagnosis case}

This Section is aimed to show the full potential of our approach. For a simulated ongoing batch process we display the group of $T^2_{\beta}$ chart and $t_{\beta}$ charts for the individual coefficients in order to show its performance to detect and diagnose the imposed disturbances.  

Let's consider the industrial process generating batches of 200 time-length following the ARMA(1,1) model, with $\phi_0$, $\phi_1$ and $\theta_1$ set as in Table \ref{tab:tbnew}. We built the charts using $\alpha=0.01$ and considered 30 in-control batches as the reference. We simulate 20 new batches with two levels of disturbances in the $\phi_1$ parameter: (i) a moderate level (from $\phi_1$=0.2 to $\phi_1$=0); and (ii) a intense level (from $\phi_1$=0.2 to $\phi_1$=0.6).

Tables \ref{fig:phizero} and \ref{fig:phizero6} show the group of control charts for the moderate and intense level of disturbance, respectively. In Table \ref{fig:phizero} we noticed that the multivariate $T^2_{\beta}$ starts to signalize a number of points out of the limits just after the moderate disturbance has imposed (after the $30^{th}$ batch). It reinforces a good performance in the simulated study shown in Table \ref{Tab2}, as expected. Additionally, from the $t_{\beta}$ charts there is a good tip about the source of the disturbance, since the $t_{\beta}$ chart for $\phi_1$ points out some points above the limits. The other two $t_{\beta}$ charts remain with nearly all points randomly running within the charts boundaries, as it should be, since there are no disturbances imposed in $\phi_0$ and $\theta_1$. 

In Table \ref{fig:phizero6} it becomes even more pronounced since the level of disturbance is higher. We can see nearly all points out of limits in the $T^2_{\beta}$ and $t_{\beta}$ chart for $\phi_1$. The $t_{\beta}$ for $\phi_0$ show a few points outside the limits after the $30^{th}$ batch. Those few false alarms are likely to be due to coefficient covariance. In general, these charts seem to work really well to signalize and diagnose the source of disturbance.

\begin{figure}[!h]
\begin{center}
\includegraphics[scale=0.4]{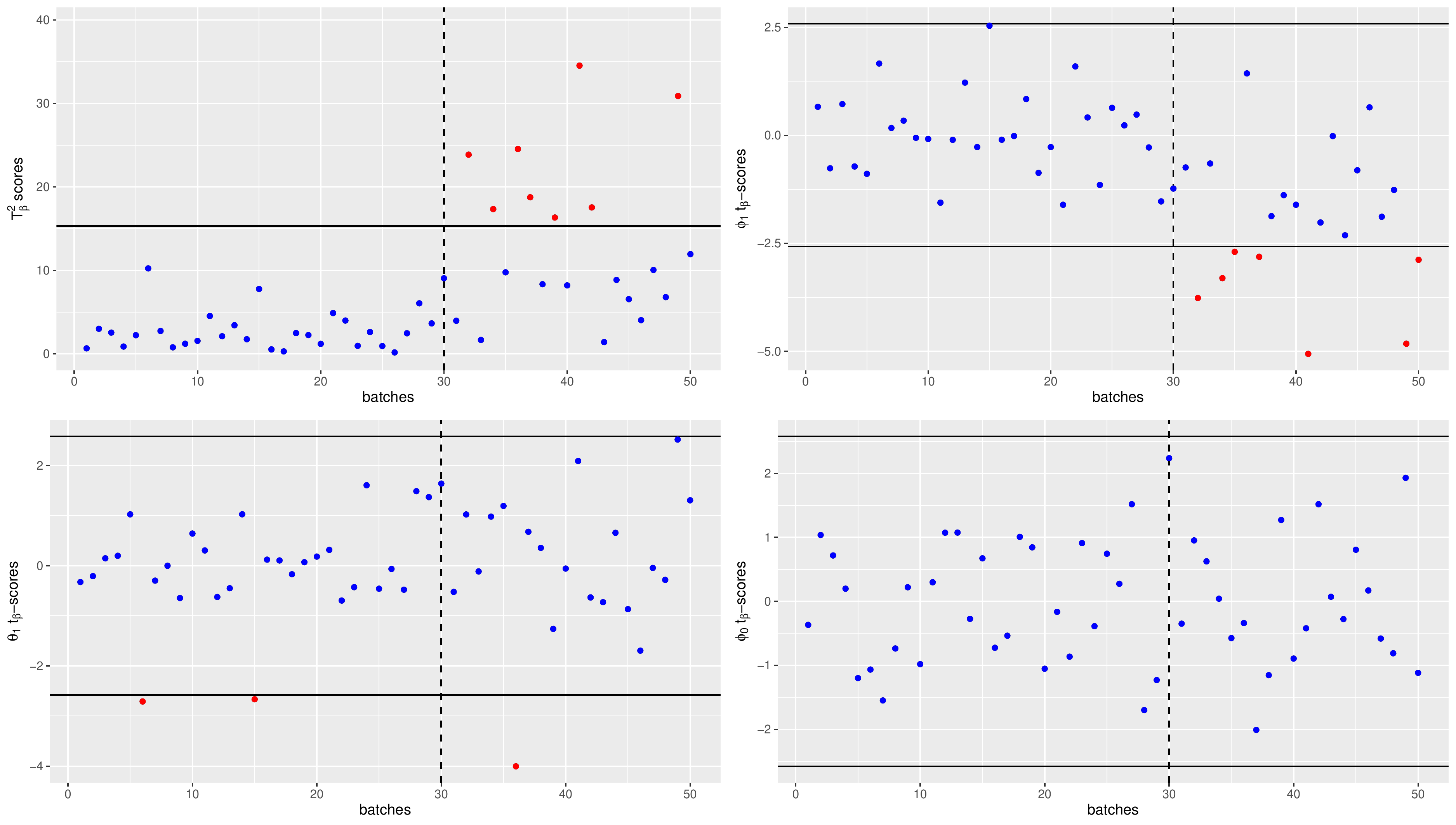}
\end{center} \vspace{-2mm}
\caption{$T^2_{\beta}$ and $t_{\beta}$ for 30 in-control batches and 20 new batches with a change in $\phi_{1}$ from 0.2 to 0.}
\label{fig:phizero}
\end{figure}

\begin{figure}[!h]
\begin{center}
\includegraphics[scale=0.4]{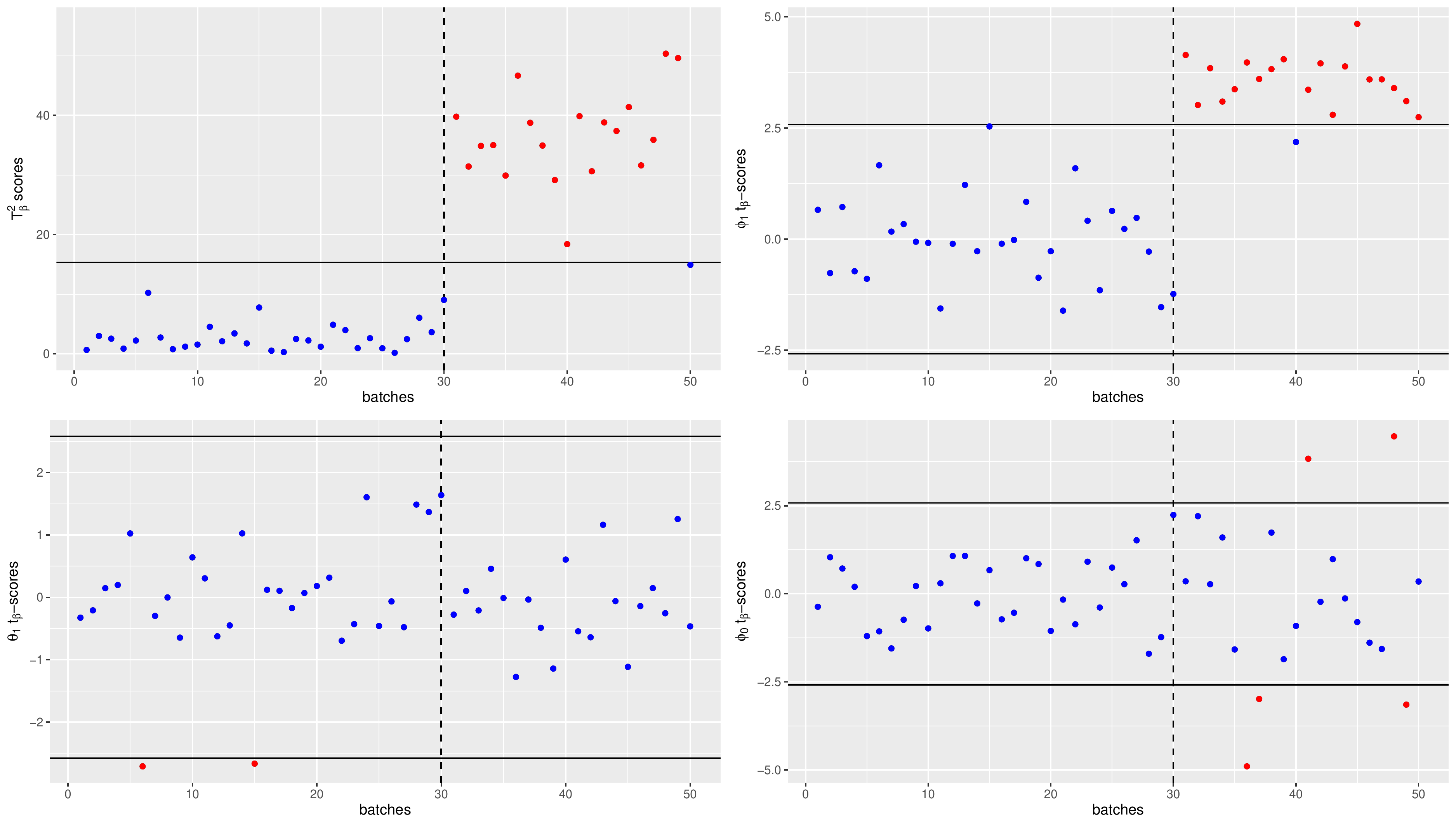}
\end{center} \vspace{-2mm}
\caption{$T^2_{\beta}$ and $t_{\beta}$ for 30 in-control batches and 20 new batches with a change in $\phi_{1}$ from 0.2 to 0.6.
}
\label{fig:phizero6}
\end{figure}

\clearpage

\section{Application}

In order to illustrate the applicability of our methodology we consider a real dataset of time series representing measurements of engine noise from \cite{TimeSeriesClass}. We can understand each time series as one batch sampled from an industrial process. Let's consider only the training dataset which has 3271 batches, each one with 500 time instants, sampled from the process operating under two different conditions, labeled as +1 and -1 with 1755 and 1846 number of batches, respectively. In this application we assume that the group labeled as +1 is the reference group, i.e., sample batches coming from the process operating in a standard condition.

The chart on the left on Figure (\ref{fig:acfar}) shows the average behaviour of the autocorrelation function (ACF) for all batches according to their group. The colours green and red represent the reference group (labeled as +1) and the monitoring group (labeled as -1), respectively. We note that the main feature is the cyclical behavior of data in both groups. Following the aim of our methodology, in Phase $\mathcal{I}$ data from the reference group are modeled by using the class of ARMA models. We know that this class of models is suitable to capture the time series dynamic rather than other features like cycles, trends, etc. For that reason we chose an AR model of high order to capture the cycles as the dynamics. Here an AR(12) was adopted in order to model this feature since the order 12 was necessary and sufficient for getting uncorrelated and normally distributed residuals. The Ljung–Box test pointed out that 100\% of the residuals are uncorrelated, non significant to presence of correlation and 95\% of the residuals were normally distributed according to the Shapiro-Wilk test for normality, using a significance level of 5\%. 
 
Although we have 12 AR coefficients in the model for the reference group, we noticed in Figure (\ref{fig:acfar}) on the right chart that the large portion of them is possibly not significantly different from zero. Thus, proceeding an individual t-test for each  coefficient and verifying their statistical significance at a level of 5\%, only the first three coefficients of the fitted AR model were significant in 95\% of batches. These results are summarized in Table \ref{tab:sig}. For this reason, the $\mathcal{T}_{\betab}^{2}$ control chart was built with the first three coefficients. Its seems to be a good choice insofar as Figure (\ref{fig:acfar}) shows a clear visual difference between the means of the adjusted coefficients from group +1 and -1 just in those 3 first AR coefficients (represented by green and red lines).

\begin{table}[h!]
\caption{Rate of significant coefficients in the individual t-test for the reference group adjusted AR(12) model.\linebreak}\label{tab:sig}
    \centering
    \begin{tabular}{c|c|c|c|c|c|c|c|c|c|c|c|c}
    \toprule
     \# Batches    & $\phi_1$ & $\phi_2$ & $\phi_3$ & $\phi_4$ & $\phi_5$ & $\phi_6$ & $\phi_7$ & $\phi_8$ & $\phi_9$ & $\phi_{10}$ & $\phi_{11}$ & $\phi_{12}$  \\
         \hline 
   1755    & 1.00 & 1.00 & 0.98 &0.35 &0.48& 0.20 &0.30& 0.20 &0.18 &0.25 &0.29 &0.51\\
        \bottomrule
   \end{tabular}
\end{table}

\begin{figure}[!h]
\begin{center}
\includegraphics[scale=0.4]{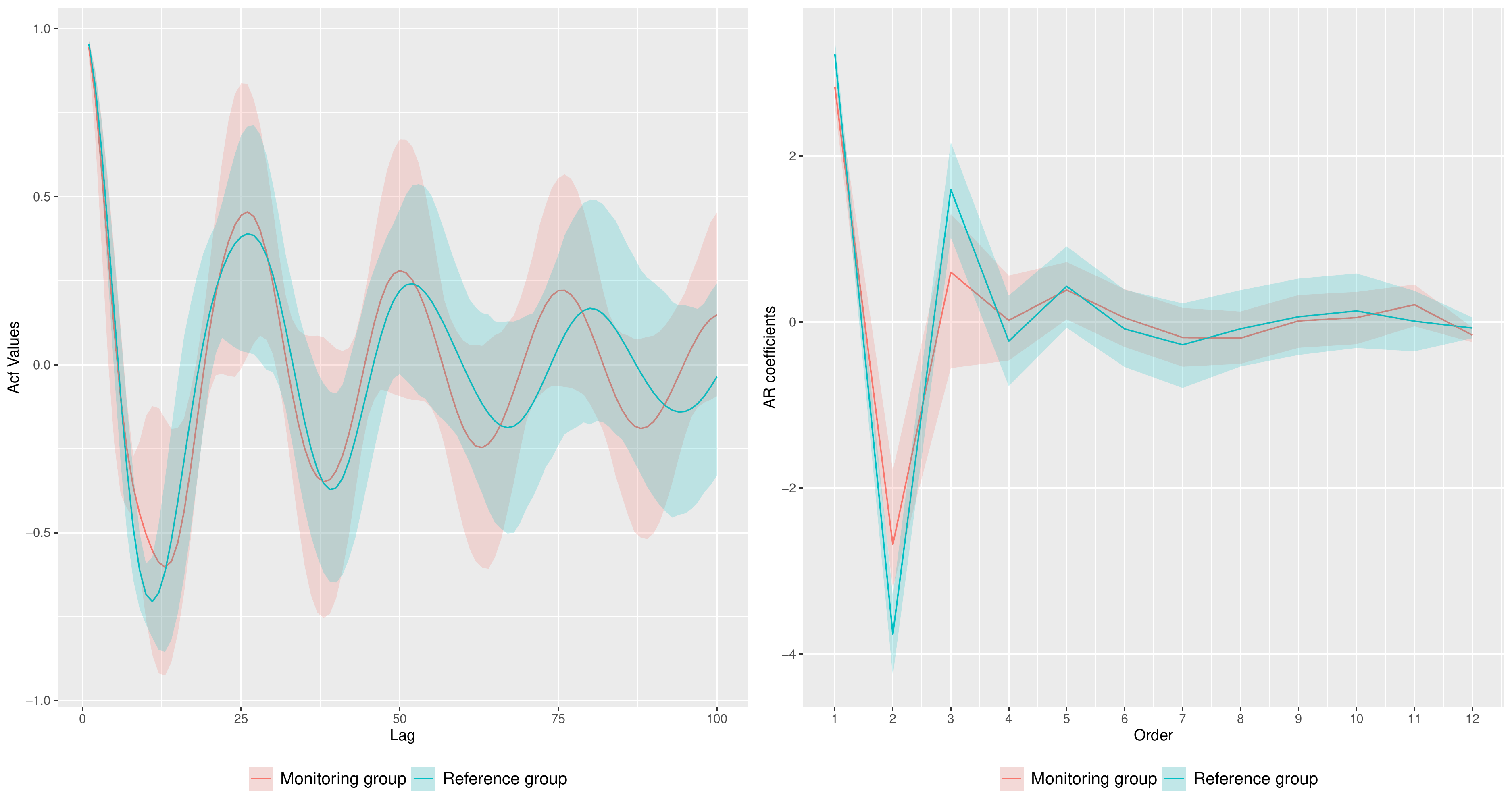}
\end{center} \vspace{-2mm}
\caption{Autocorrelations for the FordA-train dataset and AR coeficients from the ajusted  AR(12) model for all batches.
}
\label{fig:acfar}
\end{figure}

In order to show the performance of our approach we take randomly $I$ in-control batches (i.e., from the ones labeled as +1) as the reference batches to fit the AR(12) model and build the $\mathcal{T}_{\betab}^2$ chart. We do variate the study by using $I$ values of 10, 30, 100, 200, 300, 500 with 200 replications each. The control limits are set with false alarm probability ($\alpha$) of 0.10, 0.05 and 0.01. The remaining 1755 - $I$ batches are used to evaluate the empirical false alarm probability. The 1846 batches labeled as -1 are used to evaluate the power of $\mathcal{T}_{\betab}^2$ chart.

\begin{table}[ht]
\caption{AR(12): Mean $(\hat{\mu})$ and standard deviation $(\hat{\sigma})$  of
 $r_0$ (Phase $\mathcal{I}$) and $r_1$ (Phase $\mathcal{II}$) values}
\label{machineL}
\centering
\begin{tabular}{c|c|cc|cc|cc}
\toprule
&\multicolumn{7}{c}{False alarm probability ($\alpha$)} \\
\hline
  &  &\multicolumn{2}{c|}{0.10} &
       \multicolumn{2}{c|}{0.05} &
       \multicolumn{2}{c}{0.01} \\

\cline{3-8}

  & $I$ & $\mathcal{T}_{\betab}^2(\hat{\mu})$  & $\mathcal{T}_{\betab}^2(\hat{\sigma})$  & $\mathcal{T}_{\betab}^2(\hat{\mu}) $ & $\mathcal{T}_{\betab}^2(\hat{\sigma})$& $\mathcal{T}_{\betab}^2 (\hat{\mu})$ & $\mathcal{T}_{\betab}^2(\hat{\sigma})$\\ 
  \hline
 \multirow{5}{*}{\STAB{\rotatebox[origin=c]{90}{Phase $\mathcal{I}$}}} &  10  & 0.13 & 0.10  &  0.08 & 0.08  & 0.03 & 0.04  \\ 
  & 100 & 0.10 & 0.02  & 0.07 & 0.02  & 0.03 & 0.01  \\ 
   & 200 & 0.10 & 0.02  & 0.07 & 0.01  & 0.03 & 0.01  \\ 
    & 300 & 0.09 & 0.01  & 0.06 & 0.01  & 0.03 & 0.01  \\ 
 & 500 & 0.08 & 0.01  & 0.05 & 0.01  & 0.02 & 0.01  \\ 
\hline 
\multirow{5}{*}{\STAB{\rotatebox[origin=c]{90}{Phase $\mathcal{II}$}}}  & 10  & 0.76 & 0.17  & 0.61 & 0.22  & 0.31 & 0.22  \\ 
 & 30  & 0.87 & 0.07  & 0.79 & 0.09  & 0.57 & 0.15  \\ 
 & 100 & 0.89 & 0.03  & 0.83 & 0.04  & 0.67 & 0.07  \\ 
 & 200 & 0.89 & 0.02  & 0.83 & 0.03  & 0.68 & 0.04  \\ 
&  300 & 0.89 & 0.02  & 0.84 & 0.02  & 0.69 & 0.03  \\ 
 & 500 & 0.89 & 0.01  & 0.84 & 0.02  & 0.70 & 0.03  \\ 
   \hline
\end{tabular} 
\end{table}

Table \ref{machineL} summarize the results of $\mathcal{T}_{\betab}^2$ chart. The $r_0$ and $r_1$ are the rate of false alarm and disturbed batches detected, respectively. We noticed that the observed $r_{0}$ (highlighted in the gray line) is closer to the chosen nominal value of $\alpha$ as the number of samples increases, which is consistent with the theoretical distribution derived in Theorem \ref{theo2}. The $r_{1}$ values show the performance of $\mathcal{T}_{\betab}^2$ chart to signalize the out-of-control (labeled as -1) batches. As we expected the degree of detection increases as the number of reference batches $I$ in phase $\mathcal{I}$ increases. Even for the very small number of batches compared to the overall number of batches available labeled as +1, the $\mathcal{T}_{\betab}^2$ shows a good rate of detection for each false alarm probability $\alpha$.

\clearpage
\pagebreak

\section{Conclusion}

This paper introduced a new approach to deal with batch processes through a set of ARMA-based control charts. Through in-control batch samples available we fitted the ARMA model and built a group of charts based on the coefficient estimates from historical in-control batches. The modified \textit{Hotelling} and \textit{t-Stutent} distributions can easily accommodate those estimates and a decision rule was made for monitoring future samples. Additionally, the modified t-Student charts help to look for the source of disturbances.

The simulated batch process generating samples of time series from an ARMA model was presented. The $T^2_{beta}$ chart outperforms the traditional competitor based on the residuals for detecting changes of any level in the process dynamic.  Furthermore, we have shown how powerful are the individual $t_{\beta}$ charts to identify the source of disturbances imposed in the process. 

The applicability of our approach was illustrated through a real data set in which the good performance is clearly noticed, even for a very small sample reference batches compared to the overall number of in-control batches available. 

Finally, it's important to noticed that we can built the group of $T^2_{beta}$ and $t_{\beta}$ charts from any ARMA ($v$,$w$) sub model, including only AR($v$) or MA($w$) component with the order less or equal to $v$ or $w$, with and without intercept. It opens the applicability of this approach to a wide range of batch processes. 

\section*{Declaration of Competing Interest}

The authors declare that they have no known competing financial interests or personal relationships that could have appeared to influence the work reported in this paper.

\section*{Supplementary material}

\begin{description}
\item[Supplementary material:] Supplementary tables.
\item[Code:] R-functions containing all methods developed in this article (will be available in the dvqcc package at CRAN).
\item[Data:] Dataset used in the application and corresponding script (zip).
\end{description}

\bibliography{references}

\end{document}